\documentclass[sigconf]{acmart}
\copyrightyear{2026}
\acmYear{2026}
\setcopyright{cc}
\setcctype{by-nc-nd}
\acmConference[ICMI '26]{INTERNATIONAL CONFERENCE ON MULTIMODAL INTERACTION}{October 05--09, 2026}{Napoli, Italy}
\acmBooktitle{INTERNATIONAL CONFERENCE ON MULTIMODAL INTERACTION (ICMI '26), October 05--09, 2026, Napoli, Italy}
\acmDOI{10.1145/3776574.3832754}
\acmISBN{979-8-4007-2318-6/2026/10}

\usepackage{
  bm,
  mathtools,
}
\usepackage{colortbl}

\usepackage{cleveref}

\definecolor{cmd}{HTML}{E67E22}
\def\vb#1{\if#1\relax\bm{#1}\else\mathbf{#1}\fi}

\begin{document}

%%
%% The "title" command has an optional parameter,
%% allowing the author to define a "short title" to be used in page headers.
\title{MoDAl: Self-Supervised Neural Modality Discovery via Decorrelation for Speech Neuroprosthesis}

%%
%% The "author" command and its associated commands are used to define
%% the authors and their affiliations.
%% Of note is the shared affiliation of the first two authors, and the
%% "authornote" and "authornotemark" commands
%% used to denote shared contribution to the research.
\author{Yuanhao Chen}
\email{yc.th@dartmouth.edu}
\orcid{0009-0009-9689-0514}
\affiliation{%
  \institution{Dartmouth College}
  \city{Hanover}
  \state{NH}
  \country{USA}
}
\author{Peter Chin}
\orcid{0000-0002-1913-4223}
\email{pc@dartmouth.edu}
\affiliation{%
  \institution{Dartmouth College}
  \city{Hanover}
  \state{NH}
  \country{USA}
}

%%
%% By default, the full list of authors will be used in the page
%% headers. Often, this list is too long, and will overlap
%% other information printed in the page headers. This command allows
%% the author to define a more concise list
%% of authors' names for this purpose.
% \renewcommand{\shortauthors}{Chen et al.}

%%
%% The abstract is a short summary of the work to be presented in the
%% article.
\begin{abstract}
  Speech neuroprosthesis systems decode intended speech from neural activity in the absence of audible output, offering a path to restoring communication for individuals with speech-impairing conditions.
  Current approaches decode predominantly from motor cortical areas, discarding others---such as area~44, part of Broca's area---that may encode complementary linguistic information.
  We introduce \textbf{MoDAl} (\underline{Mo}dality \underline{D}ecorrelation and \underline{Al}ignment), a framework that discovers complementary neural modalities through the interplay of two objectives in a shared projection space.
  A contrastive loss aligns each of several parallel brain encoders with the text embeddings of a pretrained large language model (LLM), while a decorrelation loss prevents the encoders from coalescing to duplicative representations.
  We prove that these objectives are in productive tension: Contrastive alignment induces transitive modality coalescence, which decorrelation must counteract for the framework to discover diverse neurolinguistic modalities.
  On the Brain-to-Text Benchmark~'24, MoDAl reduces word error rate (WER) from 26.3\% to 21.6\% compared to the previous best end-to-end method, with the gain from incorporating previously discarded area~44 signals arising entirely from the decorrelation mechanism.
  Analysis of the discovered modalities reveals functional specialization: Encoders receiving area~44 input capture structural and syntactic properties (sentence length, grammatical voice, wh-words), consistent with the neurolinguistic understanding of Broca's area.
\end{abstract}

%%
%% The code below is generated by the tool at http://dl.acm.org/ccs.cfm.
%% Please copy and paste the code instead of the example below.
%%
\begin{CCSXML}
  <ccs2012>
  <concept>
  <concept_id>10010147.10010257.10010258</concept_id>
  <concept_desc>Computing methodologies~Learning paradigms</concept_desc>
  <concept_significance>500</concept_significance>
  </concept>
  <concept>
  <concept_id>10003120.10003121</concept_id>
  <concept_desc>Human-centered computing~Human computer interaction (HCI)</concept_desc>
  <concept_significance>300</concept_significance>
  </concept>
  <concept>
  <concept_id>10003120.10011738.10011775</concept_id>
  <concept_desc>Human-centered computing~Accessibility technologies</concept_desc>
  <concept_significance>300</concept_significance>
  </concept>
  <concept>
  <concept_id>10010147.10010257.10010293</concept_id>
  <concept_desc>Computing methodologies~Machine learning approaches</concept_desc>
  <concept_significance>500</concept_significance>
  </concept>
  </ccs2012>
\end{CCSXML}

\ccsdesc[500]{Computing methodologies~Learning paradigms}
\ccsdesc[300]{Human-centered computing~Human computer interaction (HCI)}
\ccsdesc[300]{Human-centered computing~Accessibility technologies}
\ccsdesc[500]{Computing methodologies~Machine learning approaches}

%%
%% Keywords. The author(s) should pick words that accurately describe
%% the work being presented. Separate the keywords with commas.
\keywords{Speech neuroprosthesis, self-supervised learning, multimodal learning, neurolinguistics, brain-computer interfaces}
%% A "teaser" image appears between the author and affiliation
%% information and the body of the document, and typically spans the
%% page.
% \begin{teaserfigure}
%   \includegraphics[width=\textwidth]{sampleteaser}
%   \caption{Seattle Mariners at Spring Training, 2010.}
%   \Description{Enjoying the baseball game from the third-base
%   seats. Ichiro Suzuki preparing to bat.}
%   \label{fig:teaser}
% \end{teaserfigure}

% \received{20 February 2007}
% \received[revised]{12 March 2009}
% \received[accepted]{5 June 2009}

%%
%% This command processes the author and affiliation and title
%% information and builds the first part of the formatted document.
\maketitle

\section{Introduction}

Speech is an essential modality of human communication, and restoring it for individuals who have lost the ability to speak is a central goal of assistive technology.
Speech neuroprosthesis---the decoding of intended speech from neural activity in the absence of audible output---has emerged as a promising direction for brain-computer interfaces (BCIs) that could restore communication to people with conditions such as amyotrophic lateral sclerosis (ALS) or locked-in syndrome~\cite{willettHighperformanceSpeechNeuroprosthesis2023,metzgerHighperformanceNeuroprosthesisSpeech2023,silvaSpeechNeuroprosthesis2024}.
As an interface that transduces neural signals into language, speech neuroprosthesis is inherently a multimodal interaction problem and directly addresses the inclusivity goals of the human-computer interaction (HCI) community: enabling natural, high-bandwidth communication for users who cannot access conventional speech or gesture-based interfaces.

Current speech neuroprosthesis systems decode speech almost exclusively from motor signals, leveraging the well-characterized mapping between neural activity and the movements of the tongue, lips, jaw, and larynx~\cite{willettHighperformanceSpeechNeuroprosthesis2023,cardAccurateRapidlyCalibrating2024,metzgerHighperformanceNeuroprosthesisSpeech2023}.
However, speech production involves more than articulators.
Broca's area (including area~44) has long been associated with higher-level language functions including syntactic processing, phonological planning, and semantic retrieval~\cite{flinkerRedefiningRoleBrocas2015,verwoertWholebrainDynamicsArticulatory2025}, yet its signals are discarded by all prior work on the Brain-to-Text Benchmark~'24 because they do not improve phoneme decoding~\cite{willettHighperformanceSpeechNeuroprosthesis2023,fengEndtoEndFrameworkInvasive2024,liBraintotextDecodingContextaware2025,zhangDecodingInnerSpeech2025}.
This raises a question: \emph{Can area~44 signals contribute to speech neuroprosthesis if the system is designed to discover and exploit \textbf{complementary} neurolinguistic modalities rather than treating all neural signals as a single articulatory stream?}

In standard multimodal contrastive learning~\cite{radfordLearningTransferableVisual2021,girdhar2023imagebind}, the modalities---text, image, audio, video---are predefined, and each has its own dedicated encoder.
Decorrelation between modalities is not a concern because the inputs are already categorically distinct.
In the neural setting, by contrast, the cortical signals from area~6v and area~44 are recorded from physically adjacent, intercorrelated electrode arrays, and there is no a priori labeling of which neural patterns constitute distinct ``modalities.''
\Citet{willettHighperformanceSpeechNeuroprosthesis2023} show that area~44 electrodes do encode sparse articulatory information, so simply feeding all channels into a single encoder can lead to redundancy with area~6v signals rather than complementary specialization.

We introduce \textbf{MoDAl} (\underline{Mo}dality \underline{D}ecorrelation and \underline{Al}ignment), a framework that addresses this challenge through the interplay of two objectives in a shared projection space.
A contrastive loss aligns each parallel brain encoder with the text embeddings of a pretrained large language model (LLM), anchoring all modalities to a common linguistic reference.
We prove that this alignment creates a transitive pressure for all modality projections to coalesce (Proposition~\ref{prop:alignment}), which would collapse the encoders into redundant copies.
A decorrelation loss counteracts this coalescence by penalizing feature-wise correlation between encoder pairs (Proposition~\ref{prop:decorr}).
By operating in a shared projection space, the two losses are kept in productive tension: Contrastive alignment prevents the decorrelation projector from collapsing trivially, while decorrelation prevents the encoders from coalescing.

Applying MoDAl to the Brain-to-Text Benchmark~'24 dataset, we show that three parallel brain encoders---trained end-to-end with an LLM decoder---discover complementary modalities that reduce word error rate (WER) from 26.3\% to 21.6\%, improving over the previous best end-to-end (E2E) method by 4.7 percentage points on the test set and approaching the performance of cascaded systems that require separately trained language models.
Crucially, the improvement from adding the previously discarded area~44 signals (MoDAl-1 $\to$ MoDAl-Full) is 0.8 percentage points---a statistically significant gain that arises entirely from the decorrelation mechanism, since without it, area~44 adds no benefit.
Linear probes reveal that the novel encoders specialize in structural and syntactic properties (sentence length, voice, wh-words) rather than articulatory features, consistent with the neurolinguistic understanding of Broca's area.
Our contributions are as follows:
\begin{enumerate}
  \item We propose MoDAl, a framework for self-supervised neural modality discovery that uses contrastive alignment and decorrelation in a shared projection space. We prove that these objectives are in productive tension: Contrastive alignment induces transitive modality coalescence, which decorrelation must counteract to encourage complementary specialization.
  \item We demonstrate that area~44 (part of Broca's area) signals, discarded by all prior work on this dataset, carry complementary linguistic information that MoDAl can discover and exploit, reducing WER by 4.7 percentage points relative to the previous best E2E method.
  \item We provide a detailed analysis of the discovered modalities through cross-correlation measurement, retrieval, canonical correlation analysis, and linear probing for linguistic properties, showing that the parallel encoders specialize in different linguistic features consistent with the known functional roles of their respective brain areas.
\end{enumerate}

\section{Related Work}

\subsection{Speech Neuroprosthesis}

Intracortical BCIs for speech decoding have advanced rapidly in recent years.
\Citet{willettHighperformanceSpeechNeuroprosthesis2023} demonstrated the first high-performance speech neuroprosthesis for a participant with ALS, using an recurrent neural network (RNN) decoder on area~6v neural activity followed by an $n$-gram language model, achieving 23.8\% WER on a 125,000-word vocabulary in real time.
\Citet{metzgerHighperformanceNeuroprosthesisSpeech2023} developed a similar cascaded system for a participant with anarthria.
\Citet{cardAccurateRapidlyCalibrating2024} introduced rapid calibration methods that reduce the data requirements for new users.
These cascaded systems separate phoneme decoding from language modeling, requiring independently trained components and beam search re-scoring at inference time.

End-to-end approaches that directly map neural signals to text through a single differentiable model have emerged as an alternative.
\Citet{fengEndtoEndFrameworkInvasive2024} proposed the first E2E framework for this dataset, using a gated recurrent unit (GRU) encoder with a LLM decoder fine-tuned via quantized low-rank adaptation (QLoRA).
% , achieving 26.3\% test WER.
% \Citet{liBraintotextDecodingContextaware2025} improved the cascaded approach by incorporating context-aware neural representations with diphone and triphone decoding.
\Citet{zhangDecodingInnerSpeech2025} introduced BIT, using a transformer encoder pretrained on cross-species motor signals to achieve strong results in both cascaded and E2E settings.
\Citet{fiedlerTeachingWav2Vec2Language2025} adapted the self-supervised wav2vec~2.0 speech framework~\cite{baevskiWav2vec20Framework2020a} to neural signals, treating brain activity as a ``language'' to be learned.

All of these methods decode exclusively from ventral premotor cortex (area~6v), which encodes articulator movements.
None exploit signals from area~44 (part of Broca's area), which \citet{willettHighperformanceSpeechNeuroprosthesis2023} found to degrade phoneme decoding.
However, neuroscience evidence suggests that Broca's area plays a role in higher-level language processing---including syntactic structure, semantic retrieval, and phonological planning---distinct from articulatory representations in the motor cortex~\cite{flinkerRedefiningRoleBrocas2015,verwoertWholebrainDynamicsArticulatory2025}.
\Citet{verwoertWholebrainDynamicsArticulatory2025} recently showed through whole-brain intracranial recordings that the inferior frontal cortex (containing Broca's area) was correlated almost exclusively with semantic rather than articulatory or acoustic speech representations. They noted that its indirect connection to low-level speech features \emph{``limit[s] its use for an articulatory-based speech neuroprosthesis.''}
Our work tests whether a framework designed for modality \emph{discovery}---rather than articulatory decoding---can recover this complementary information.

% For a broader perspective on the speech neuroprosthesis landscape, including non-intracortical approaches such as electrocorticography (ECoG) and non-invasive recordings, we refer readers to \citet{silvaSpeechNeuroprosthesis2024} and \citet{tangSemanticReconstructionContinuous2023}.

\subsection{Multimodal Learning}

Contrastive learning has become the dominant approach for aligning representations across modalities.
CLIP~\cite{radfordLearningTransferableVisual2021} demonstrated that image and text encoders trained with a symmetric InfoNCE loss
% on web-scale paired data
produce a joint embedding space with strong zero-shot transfer capabilities.
ImageBind~\cite{girdhar2023imagebind} explores using image as an anchor modality, showing that aligning each new modality (audio, depth, thermal, inertial measurement unit) to image via pairwise contrastive losses induces \emph{emergent} alignment between modalities that were never paired during training---a transitive property of shared embedding spaces.
% ViT-Lens~\cite{leiViTLensOmnimodalRepresentations2024} further scaled this idea by repurposing pretrained vision transformers as encoders for novel modalities including 3D point clouds, tactile, and electroencephalography (EEG) data.
% VAST~\cite{chenVASTVisionAudioSubtitleText2023} unifies video, audio, subtitle, and text in a single representation.

A key assumption in these methods is that the modalities are \emph{predefined}: Each has a distinct physical origin (camera, microphone, depth sensor), dedicated encoder, and typically abundant paired data.
Decorrelation between modalities is not needed because the inputs are already categorically distinct---an image and its caption share semantic content but are never at risk of encoding identical low-level features.
In our setting, the parallel brain encoders receive signals from physically adjacent, intercorrelated cortical arrays~\cite{willettHighperformanceSpeechNeuroprosthesis2023}, making inter-modality redundancy the default outcome rather than an exceptional failure mode.

Self-supervised decorrelation methods provide the complementary tool we need.
Barlow Twins~\cite{zbontarBarlowTwinsSelfSupervised2021} introduced a cross-correlation-based loss that drives augmented views of the same input toward invariant but non-redundant representations by pushing the cross-correlation matrix toward the identity.
Our decorrelation loss adapts this idea to the inter-modality setting: Instead of targeting identity (invariance), we target zero diagonal (decorrelation), and we compute the cross-correlation \emph{between} different modality encoders rather than between augmented views.
Moreover, we show that this loss must operate in a shared projection space with the contrastive loss to be effective; a separate decorrelation projector collapses trivially (Table~\ref{tab:crosscorr}).

MoDAl combines these two traditions---contrastive alignment from multimodal learning and decorrelation from self-supervised learning---to address a problem that neither solves alone: discovering complementary modalities from intercorrelated neural signals without predefined modality labels.

\section{Method}

\subsection{Dataset}

We use the Brain-to-Text Benchmark~'24 dataset~\cite{willettHighperformanceSpeechNeuroprosthesis2023}, which contains attempted speech recordings from participant T12, a woman with bulbar-onset ALS and unable to produce intelligible speech.
The dataset comprises 12,100 large-vocabulary sentences collected across 25 recording sessions spanning approximately four months.

\paragraph{Preprocessing}
Neural activity was recorded from four 64-electrode Utah microelectrode arrays (256 electrodes total): two implanted in ventral premotor cortex (area~6v) and two in the ventral portion of area~44, part of Broca's area.
For each cortical area, two signal types are provided in 20\,ms time bins: threshold crossing counts $\vb c^{(\mathrm{44})}, \vb c^{(\mathrm{6v})}$ and spike band power $\vb p^{(\mathrm{44})}, \vb p^{(\mathrm{6v})}$.
Blockwise z-scoring is applied to mitigate recording nonstationarities.
Following \citet{willettHighperformanceSpeechNeuroprosthesis2023}, we stack each sequence with a kernel of $k{=}14$ bins and stride $s{=}4$ to yield patches of 280\,ms with 200\,ms overlap, giving $D_{\mathrm{in}}=64k=896$ features per signal type.

\paragraph{Previously Discarded Area~44 Signals}
Previous studies on this dataset decode speech exclusively from area~6v.
\Citet{willettHighperformanceSpeechNeuroprosthesis2023} report that area~44 contains little production-related information and that including it degrades decoding performance; subsequent works have likewise discarded area~44 entirely~\cite{fengEndtoEndFrameworkInvasive2024,liBraintotextDecodingContextaware2025,zhangDecodingInnerSpeech2025,fiedlerTeachingWav2Vec2Language2025}.
A central question of this work is whether area~44 signals carry complementary linguistic information that can be recovered through self-supervised modality discovery and leveraged by an LLM decoder.

\subsection{Architecture}

\begin{figure}
  \includegraphics[width=\linewidth]{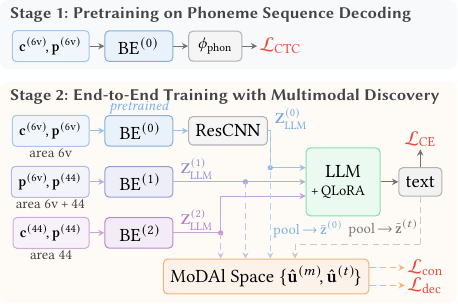}
  \caption{MoDAl architecture. Stage~1 pretrains brain encoder $\operatorname{BE}^{(0)}$ on phonemes with CTC loss. In Stage~2, three parallel brain encoders feed a QLoRA-adapted LLM decoder (solid), while branching into the shared MoDAl projection space with text (dashed) for contrastive alignment and decorrelation.}
  \Description[Architecture diagram of the MoDAl framework showing two training stages.]{Two gray boxes labeled Stage 1 and Stage 2. Stage 1 is a linear pipeline from area 6v input through brain encoder BE0 to CTC loss. Stage 2 shows three color-coded parallel brain encoders (blue, blue-purple, purple) receiving different cortical inputs, converging via solid colored arrows into an LLM decoder connected bidirectionally to a text node. At junction dots on each path, dashed vertical lines branch into an orange MoDAl Space box below. A dashed line from the text node also enters the MoDAl space. Two dashed red arrows exit the MoDAl box rightward to contrastive and decorrelation losses.}
  \label{fig:architecture}
\end{figure}

\subsubsection{Articulatory Brain Encoder}
The only component trained in Stage~1 is a brain encoder that maps area~6v neural activity to articulatory features for phoneme decoding.
GRU-based encoders were adopted following prior work on this benchmark, which found that GRUs outperform LSTM and transformer alternatives~\cite{liBraintotextDecodingContextaware2025,fiedlerTeachingWav2Vec2Language2025}.
We additionally apply channel-wise locked dropout~\cite{galDropoutBayesianApproximation2016} to the input features and layer normalization~\cite{baLayerNormalization2016} to the recurrent outputs, which we find improves generalization to held-out sessions.

To accommodate neural nonstationarities across recording sessions, we apply a day-specific affine input projection indexed by session day $d$ followed by Softsign activation:
\[
  \vb X' = \phi_{\mathrm{in}}^{(d)}(\vb X) = \operatorname{Softsign}(\vb X \vb W_{\mathrm{in}}^{(d)\top} + \vb b_{\mathrm{in}}^{(d)}) \in \mathbb{R}^{B\times T\times 2D_{\mathrm{in}}},
\]
where $\vb X$ concatenates $\vb c^{(\mathrm{6v})}$ and $\vb p^{(\mathrm{6v})}$, hence $2D_{\mathrm{in}}$.

Channel-wise locked dropout randomly zeros entire channels uniformly across time steps, encouraging the model to distribute its representation broadly rather than overfitting to a subset of electrodes.
The core encoder is a stack of $5$ bidirectional GRU layers with layer normalization on the output:
\[
  \vb Y = (\operatorname{LN}\circ\operatorname{BiGRU}\circ\operatorname{Drop}_{\mathrm{chan}})(\vb X') \in \mathbb{R}^{B\times T\times 2D_{\mathrm{GRU}}},
\]
where $2D_{\mathrm{GRU}}$ reflects the concatenation of forward and backward hidden states.
The per-frame outputs are projected onto $\lvert\mathcal{P}\rvert=44$ phoneme tokens for connectionist temporal classification (CTC) decoding:
$\vb Z_{\mathrm{phon}} = \phi_{\mathrm{phon}}(\vb Y) \in \mathbb{R}^{B\times T\times \lvert\mathcal{P}\rvert}$.

\subsubsection{Parallel Brain Encoders for Multimodal Discovery}
In Stage~2, we add parallel brain encoders, each with a separate affine head mapping its outputs into the LLM embedding space:
$\vb Z_{\mathrm{LLM}}^{(m)} = \phi_{\mathrm{LLM}}^{(m)}(\vb Y^{(m)}) \in \mathbb{R}^{B\times T\times D_{\mathrm{LLM}}}$,
where superscript $(m)$ indexes the brain encoder modality.
In our full model, we have three brain encoders: one for area~6v, one for both areas, and one for area~44.
\begin{align*}
  &\vb Z_{\mathrm{LLM}}^{(m)} = \operatorname{BE}^{(m)}(\vb X^{(m)}),\quad m \in \{0, 1, 2\},\\
  \mathrm{where} \quad &\mathrm{BE}^{(m)} = \phi_{\mathrm{LLM}}^{(m)} \circ \operatorname{LN} \circ \operatorname{BiGRU} \circ \operatorname{Drop}_{\mathrm{chan}} \circ \operatorname{\phi}_{\mathrm{in}}^{(m,d)},\\
  \mathrm{and} \quad &\vb X^{(0)} =
  \left[\begin{array}{@{}c@{}}
    \vb c^{(\mathrm{6v})}_{1:T} \\\midrule \vb p^{(\mathrm{6v})}_{1:T}
  \end{array}\right],
  \quad \vb X^{(1)} =
  \left[\begin{array}{@{}c@{}} \vb p^{(\mathrm{6v})}_{1:T} \\\midrule \vb p^{(\mathrm{44})}_{1:T}
  \end{array}\right],
  \quad \vb X^{(2)} =
  \left[\begin{array}{@{}c@{}} \vb c^{(\mathrm{44})}_{1:T} \\\midrule \vb p^{(\mathrm{44})}_{1:T}
  \end{array}\right].
\end{align*}
$\operatorname{BE}^{(0)}$ is initialized from the pretrained Stage~1 encoder; the others share its architecture but are randomly initialized.
Due to limited computational resources, only three parallel encoders are used, and $\operatorname{BE}^{(1)}$ receives only spike power features, which are more robust and informative than threshold crossings for area~6v~\cite{willettHighperformanceSpeechNeuroprosthesis2023}.

\paragraph{Temporal Residual CNN}
The articulatory encoder $\operatorname{BE}^{(0)}$ captures phoneme-level motor dynamics, so its full per-frame output $\vb Z_{\mathrm{LLM}}^{(0)} \in \mathbb{R}^{B\times T\times D_{\mathrm{LLM}}}$ is retained as a token sequence for the LLM decoder.
To refine these representations, we apply a temporal residual convolutional neural network (CNN):
\[
  {\vb Z}_{\mathrm{LLM}}^{(0)}
  \overset{\mathrm{redef}}{=} (\phi_{\mathrm{ref}}\circ\operatorname{GELU}\circ\operatorname{ResCNN})(\vb Z_{\mathrm{LLM}}^{(0)}),
\]
where $\operatorname{ResCNN}(\vb Z) = \vb Z + (\operatorname{LN}\circ\operatorname{Conv}_2\circ\operatorname{GELU}\circ\operatorname{Conv}_1)(\vb Z)$ uses kernel size $5$ and dilation $2$, giving an effective receptive field of $17$ frames ($\approx1.56\,\mathrm{s}$ of input signal) that captures longer-range temporal context.

\paragraph{Unlabeled Utterance-Level Modalities}
\label{sec:utterance_level_modalities}
Unlike the articulatory encoder, $\operatorname{BE}^{(1)}$ and $\operatorname{BE}^{(2)}$ are designed to discover complementary modalities through self-supervised decorrelation, without phoneme supervision.
The information these encoders may capture---e.g., sentence-level semantics or syntactic structure---is not expected to be frame-aligned.
Accordingly, we represent each new modality as a single vector:
\[
  {\vb Z}_{\mathrm{LLM}}^{(m)}
  \overset{\mathrm{redef}}{=} \phi_{\mathrm{LLM}}^{(m)}(\vb Y^{(m)}_{T})
  \in \mathbb{R}^{B \times 1\times D_{\mathrm{LLM}}},
  \quad m \in \{1, 2\},
\]
where $\vb Y^{(m)}_{T}$ is the final hidden state of the BiGRU stack.
These vectors are appended to the articulatory token sequence as global conditioning signals for the LLM decoder.
Empirically, passing full sequence outputs from the new encoders destabilizes training---likely because multiple unsupervised token streams create redundant temporal signals with no phoneme-level anchor---whereas utterance-level summaries avoid this while preserving the capacity to encode abstract linguistic information.

\subsubsection{Multimodal LLM Decoder}

The decoder is a pretrained LLM (Aero-1-Audio-1.5B~\cite{Aero1AudioLMMsLab}) that autoregressively generates the transcription conditioned on the brain encoder outputs.
The input embedding sequence concatenates the encoder representations with a tokenized text prompt and, during training, the target transcription:
\[
  \vb E =
  \left[\begin{array}{c|c|c|c|c}
    \vb Z_{\mathrm{LLM}}^{(0)} & \vb Z_{\mathrm{LLM}}^{(1)} & \vb Z_{\mathrm{LLM}}^{(2)} & \vb E_{\mathrm{prompt}} & \vb E_{\mathrm{target}}
  \end{array}\right]
\]
where $\vb Z_{\mathrm{LLM}}^{(0)} \in \mathbb{R}^{B\times T\times D_{\mathrm{LLM}}}$ is the articulatory token sequence, $\vb Z_{\mathrm{LLM}}^{(1)}, \vb Z_{\mathrm{LLM}}^{(2)} \in \mathbb{R}^{B\times 1\times D_{\mathrm{LLM}}}$ are the utterance-level features, and $\vb E_{\mathrm{prompt}}, \vb E_{\mathrm{target}}$ are the token embeddings of the text prompt and target sentence.
Each encoder output occupies a separate conversational turn using the LLM's chat template, so the model can attend to each modality as a distinct input segment.
The text prompt instructs the model to transcribe the preceding neural signals.
During inference, $\vb E_{\mathrm{target}}$ starts empty and the model generates tokens autoregressively.

To adapt the LLM while preserving its pretrained language capabilities, we freeze all base parameters and apply QLoRA~\cite{huLoRALowRankAdaptation2021,dettmersQLORAEfficientFinetuning2023} with 4-bit NormalFloat quantization.
Low-rank adapters ($r{=}16$, $\alpha{=}32$) are inserted into all linear projections (\texttt{[qkvo]\_proj}, \texttt{(gate|up|down)\_proj}).
The brain encoders, projection heads, and LoRA adapters are the only trainable components in Stage~2.

\subsection{Training Objectives}

\subsubsection{Stage~1: Phoneme Decoding with CTC Loss}
\label{sec:ctc_loss}

In Stage~1 we train the articulatory encoder $\operatorname{BE}^{(0)}$ with the CTC loss~\cite{gravesConnectionistTemporalClassification2006} to predict phoneme sequences from area~6v neural activity.
Let $\vb p = (p_1, \dots, p_L)$ denote the target phoneme sequence and let $\mathcal{B}^{-1}(\vb p)$ be the set of frame-level alignments that reduce to $\vb p$ under the CTC collapsing function $\mathcal{B}$.
The loss marginalizes over all valid alignments:
\[
  \predisplaysize=0pt
  \mathcal{L}_{\mathrm{CTC}} = -\log \sum_{\mkern-9mu{\vb\pi \in \mathcal{B}^{-1}(\vb p)}\mkern-9mu} P(\vb\pi \mid \vb Z_{\mathrm{phon}}).
\]
Only $\operatorname{BE}^{(0)}$ and $\phi_{\mathrm{phon}}$ are trained in this stage; the LLM is not involved.

\subsubsection{Stage~2: End-to-End Training with Multimodal Discovery}
\label{sec:multimodal_losses}

In Stage~2, the full pipeline is trained end-to-end with a composite objective:
\[
  \predisplaysize=0pt
  \mathcal{L} = \mathcal{L}_{\mathrm{CE}}
  + \lambda_{\mathrm{con}}(s)\,\mathcal{L}_{\mathrm{con}}
  + \lambda_{\mathrm{dec}}(s)\,\mathcal{L}_{\mathrm{dec}},
\]
where $\mathcal{L}_{\mathrm{CE}}$ is the cross-entropy loss for transcription, $\mathcal{L}_{\mathrm{con}}$ is a contrastive alignment loss, $\mathcal{L}_{\mathrm{dec}}$ is a decorrelation loss, and $\lambda_{\mathrm{con}}(s), \lambda_{\mathrm{dec}}(s)$ are step-varying weights defined below.

\paragraph{Cross-Entropy Loss}
The LLM decoder is trained with standard next-token cross-entropy on the target transcription $\vb y = (y_1, \dots, y_L)$:
\[
  \predisplaysize=0pt
  \mathcal{L}_{\mathrm{CE}} = -\frac{1}{L}\sum_{l=1}^{L} \log P(y_l \mid \vb E_{<l}),
\]
where $\vb E_{<l}$ denotes all preceding input embeddings (neural tokens, prompt tokens, and previously generated target tokens).
This is the primary training signal that teaches the LLM to transduce neural representations into text.

\paragraph{Shared MoDAl Projection Space}
\label{sec:modal_space}
The contrastive and decorrelation losses operate in a shared projection space $\mathbb{R}^{D_\mathrm{MoDAl}}$ (the \emph{MoDAl space}), separate from the LLM embedding space to avoid disrupting its geometry.
For each brain encoder modality $m$ and for text $t$, we first obtain a single vector per sample by mean-pooling across time (a no-op for utterance-level $\operatorname{BE}^{(1,2)}$):
\[
  \bar{\vb z}^{(\cdot)}_i = \operatorname{pool}(\vb Z_{\mathrm{LLM},i}^{(\cdot)}) \in \mathbb{R}^{D_{\mathrm{LLM}}},
\]
then map each into the MoDAl space via modality-specific affine projections:
\[
  \vb u^{(\cdot)}_i = \phi^{(\cdot)}_{\mathrm{MoDAl}}(\bar{\vb z}^{(\cdot)}_i) \in \mathbb{R}^{D_{\mathrm{MoDAl}}},
  \quad (\cdot) \in \{0,1,2,t\}.
\]
% Projecting into a separate space avoids disrupting the LLM embedding geometry, which is optimized for next-token prediction rather than sample discrimination or modality decorrelation.

A critical design choice is that both losses share the same projection weights $\phi^{(\cdot)}_{\mathrm{MoDAl}}$ and per-sample L2 normalization, yielding \[\hat{\vb u}^{(\cdot)}_i = \vb u^{(\cdot)}_i / \lVert \vb u^{(\cdot)}_i \rVert \in \mathbb{S}^{D_{\mathrm{MoDAl}}-1}.\]
The contrastive loss operates on these unit vectors directly.
The decorrelation loss applies an additional per-feature batch standardization (centering and scaling each dimension to zero mean and unit variance across the batch) to obtain $\check{\vb u}^{(m)}_i$, so that the cross-correlation matrix $\vb C^{(m,n)}$ yields Pearson correlations.
If separate projectors were used for each loss, the decorrelation projector could map all samples to a constant vector---producing the zero matrix after batch standardization and trivially minimizing $\mathcal{L}_{\mathrm{dec}}$ without decorrelating the encoders.
Sharing projections prevents this: The contrastive loss requires sample-discriminative projections, precluding constant-map collapse and ensuring that decorrelation is computed over text-relevant directions (validated by ablations in Tables~\ref{tab:loss_ablation} and~\ref{tab:crosscorr}).
The shared L2 normalization further ensures both losses measure angular relationships in the same geometry, preventing an encoder from satisfying decorrelation through magnitude differences alone.

\paragraph{Contrastive Loss}
The contrastive loss aligns each brain encoder with the text representation via a symmetric InfoNCE objective~\cite{oordRepresentationLearningContrastive2019}.
For modality $m$ and sample $i$:
\[
  \ell^{(m \to t)}_{i} = -\log\frac{e^{\hat{\vb u}^{(m)}_i \cdot \hat{\vb u}^{(t)}_i / \tau}}
  {\sum_j e^{\hat{\vb u}^{(m)}_i \cdot \hat{\vb u}^{(t)}_j / \tau}},\quad
  \ell^{(t \to m)}_{i} = -\log\frac{e^{\hat{\vb u}^{(t)}_i \cdot \hat{\vb u}^{(m)}_i / \tau}}
  {\sum_j e^{\hat{\vb u}^{(t)}_i \cdot \hat{\vb u}^{(m)}_j / \tau}},
\]
where $\tau$ is a learnable temperature parameter.
The total contrastive loss symmetrizes over directions and averages over modalities:
\[
  \mathcal{L}_{\mathrm{con}} = \frac{1}{M}\sum_{m=0}^{M{-}1}
  \frac{1}{2B}\sum_{i=1}^{B}\bigl[\ell^{(m \to t)}_{i} + \ell^{(t \to m)}_{i}\bigr].
\]
Each modality is independently aligned with text; no direct inter-modality pairing is enforced.

\paragraph{Contrastive Transitivity}
Although the contrastive loss pairs each modality independently with text, all modalities share the same text projector $\phi^{(t)}_{\mathrm{MoDAl}}$ and therefore the same text targets $\hat{\vb u}^{(t)}_i$.
This creates a transitive alignment pressure: driving each $\hat{\vb u}^{(m)}_i$ toward $\hat{\vb u}^{(t)}_i$ simultaneously drives all modalities toward each other.

\begin{proposition}[Contrastive Transitivity]\label{prop:alignment}
  Optimize $\mathcal{L}_{\mathrm{con}}$ jointly over all modality projections $\{\hat{\vb u}^{(m)}_i\}$ and text projections $\{\hat{\vb u}^{(t)}_i\}$ on the unit sphere $\mathbb{S}^{D_{\mathrm{MoDAl}}-1}$ (the text targets are co-adapted, not held fixed).
  For $D_{\mathrm{MoDAl}} \ge B-1$ and any fixed temperature $\tau > 0$, every global minimizer satisfies
  $\hat{\vb u}^{(m)}_i = \hat{\vb u}^{(t)}_i$ for all $m \in \{0, \dots, M{-}1\}$ and $i \in [B]$.
  Consequently, $\hat{\vb u}^{(m)}_i = \hat{\vb u}^{(n)}_i$ for all modality pairs $(m,n)$ and all samples~$i$: All modalities coalesce.
\end{proposition}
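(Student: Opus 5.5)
The plan is to treat each modality's loss as a function of the Gram-type similarities $s^{(m)}_{ij}=\hat{\vb u}^{(m)}_i\cdot\hat{\vb u}^{(t)}_j$. I will lower-bound it by a one-variable function of the mean diagonal similarity $\bar a^{(m)}=\frac1B\sum_i s^{(m)}_{ii}$, show that this function is strictly decreasing, and conclude that any minimizer must have $\bar a^{(m)}=1$, i.e.\ $\hat{\vb u}^{(m)}_i=\hat{\vb u}^{(t)}_i$. Since $\mathcal{L}_{\mathrm{con}}$ is an average over modalities that share only the text vectors, it suffices to show that the common lower bound is attained by one configuration for all $m$ simultaneously, and that attaining it forces equality for every $m$.

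\textbf{Step 1 (reduction to two scalars).} Write $\ell^{(m\to t)}_i=-s_{ii}/\tau+\log\bigl(e^{s_{ii}/\tau}+\sum_{j\ne i}e^{s_{ij}/\tau}\bigr)$. By Jensen on the off-diagonal exponentials, this is at least $f(s_{ii},\bar b_i)$, where $\bar b_i$ is the row-mean of the off-diagonal similarities and $f(a,b)=-a/\tau+\log(e^{a/\tau}+(B-1)e^{b/\tau})$. The function $f$ is jointly convex (log-sum-exp of affine maps, minus a linear term). Averaging over $i$, and doing the same with column means for $\ell^{(t\to m)}_i$, gives
\[
\tfrac{1}{2B}\textstyle\sum_i\bigl[\ell^{(m\to t)}_i+\ell^{(t\to m)}_i\bigr]\ \ge\ f(\bar a,\bar b),
\]
where $\bar b$ is the mean of all off-diagonal $s_{ij}$. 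Moreover, $f$ is strictly decreasing in $a$ and strictly increasing in $b$.

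\textbf{Step 2 (coupling $\bar b$ to $\bar a$).} This is the step I expect to be the crux, because $\bar b$ could a priori be made very negative when $\hat{\vb u}^{(m)}\neq\hat{\vb u}^{(t)}$. I would use the midpoint and half-difference vectors $\vb w_i=(\hat{\vb u}^{(m)}_i+\hat{\vb u}^{(t)}_i)/2$ and $\vb d_i=(\hat{\vb u}^{(m)}_i-\hat{\vb u}^{(t)}_i)/2$. Expanding, the cross terms cancel in the full double sum, so $\sum_{i,j}s_{ij}=\lVert\sum_i\vb w_i\rVert^2-\lVert\sum_i\vb d_i\rVert^2$. Cauchy--Schwarz gives $\lVert\sum_i\vb d_i\rVert^2\le B\sum_i\lVert\vb d_i\rVert^2$, and $\lVert\vb d_i\rVert^2=(1-s_{ii})/2$. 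Therefore
\[
\bar b\ \ge\ -\frac{\bar a+B(1-\bar a)/2}{B-1}=:\beta(\bar a).
\]
This yields the lower bound $g(\bar a)=f(\bar a,\beta(\bar a))$.

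\textbf{Step 3 (monotonicity and equality).} Let $p=e^{a/\tau}/(e^{a/\tau}+(B-1)e^{b/\tau})\in(0,1)$. Then $\partial_af=-(1-p)/\tau$, $\partial_bf=(1-p)/\tau$, and $\beta'(a)=(B-2)/(2(B-1))<1$. Hence
\[
g'(a)=\frac{1-p}{\tau}\Bigl(\frac{B-2}{2(B-1)}-1\Bigr)<0,
\]
so $g$ is strictly decreasing on $[-1,1]$ and its minimum is $g(1)$. This bound is attained: take $\hat{\vb u}^{(m)}_i=\hat{\vb u}^{(t)}_i$ equal to the vertices of a regular simplex centered at the origin. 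That configuration needs $D_{\mathrm{MoDAl}}\ge B-1$, which is exactly where the hypothesis enters. It makes all off-diagonals equal to $-1/(B-1)=\beta(1)$, so Jensen and Cauchy--Schwarz are tight, and it does so simultaneously for every $m$ with shared text vectors. Hence the global minimum value of $\mathcal{L}_{\mathrm{con}}$ equals $g(1)$. Any global minimizer must then achieve $g(1)$ for every $m$, which by strict monotonicity forces $\bar a^{(m)}=1$. Since each $s^{(m)}_{ii}\le1$, this gives $s^{(m)}_{ii}=1$ for all $i$, i.e.\ $\hat{\vb u}^{(m)}_i=\hat{\vb u}^{(t)}_i$ on the unit sphere. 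The coalescence $\hat{\vb u}^{(m)}_i=\hat{\vb u}^{(n)}_i$ then follows by transitivity through the shared text targets.
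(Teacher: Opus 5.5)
Your proof is correct, and it takes a different route from the paper's. The paper's proof rests on a citation. It identifies $\sum_i \ell^{(m\to t)}_i$ with the asymmetric cross-entropy functional of Lu and Steinerberger and imports their Theorems~1--2: for $D_{\mathrm{MoDAl}}\ge B-1$, the minimum over unit vectors is attained if and only if $\hat{\vb u}^{(m)}_i=\hat{\vb u}^{(t)}_i$ and the points form a simplex ETF. It treats the text-to-neural direction as the same functional with the two point sets swapped, so that direction has the same minimizer set. It then couples the modalities through the shared text vectors exactly as in your last step. You instead prove the key lemma from first principles. Jensen on the off-diagonal exponentials, plus joint convexity of $f$, bounds \emph{both} directions by the same $f(\bar a,\bar b)$, because row and column off-diagonal averages have the same mean; symmetrization therefore costs nothing. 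Your identity $\sum_{i,j}s_{ij}=\lVert\sum_i\vb w_i\rVert^2-\lVert\sum_i\vb d_i\rVert^2$, combined with Cauchy--Schwarz, gives the coupling $\bar b\ge\beta(\bar a)$, which is the work the cited theorem does for two distinct point sets. I checked the identity, $\lVert\vb d_i\rVert^2=(1-s_{ii})/2$, the formula for $\beta$, $\beta'=(B-2)/(2(B-1))$, the sign of $g'$, and tightness at the centered simplex, and all are right. Your route is self-contained and makes visible that $D_{\mathrm{MoDAl}}\ge B-1$ is needed only for attainability of $g(1)$ and that any $\tau>0$ works. The citation additionally gives the full description of the optimum (simplex ETF, unique up to an orthogonal map). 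The proposition does not need this, but you could extract it from your equality cases: strict convexity of the exponential forces equal off-diagonals within each row, and tightness of $\bar b\ge\beta(1)$ forces $\sum_i\hat{\vb u}^{(t)}_i=0$. Like the paper, you tacitly assume $B\ge2$; for $B=1$ the loss vanishes identically and the conclusion fails.
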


\noindent
The proof (Appendix~\ref{app:proofs}) reduces $\mathcal{L}_{\mathrm{con}}$ to the neural-collapse objective of Lu and Steinerberger~\cite{luNeuralCollapseCrossentropy2022}; at the optimum the shared representation is in fact a simplex equiangular tight frame, unique up to a global orthogonal transformation, but only the coalescence $\hat{\vb u}^{(m)}_i = \hat{\vb u}^{(n)}_i$ is used below.

\paragraph{Decorrelation Loss}
To encourage the brain encoders to learn diverse representations, the decorrelation loss counteracts the coalescence described above by penalizing statistical dependence between modality pairs.
Starting from $\hat{\vb u}^{(m)}_i$, we apply per-feature batch standardization to obtain $\check{\vb u}^{(m)}_i$ and compute the empirical cross-correlation matrix:
\[
  \vb C^{(m,n)} = \frac{1}{B}\,\check{\vb U}^{(m)\top}\check{\vb U}^{(n)} \in \mathbb{R}^{D_{\mathrm{MoDAl}} \times D_{\mathrm{MoDAl}}},
\]
where $\check{\vb U}^{(m)} \in \mathbb{R}^{B \times D_{\mathrm{MoDAl}}}$ stacks the standardized projections and the diagonal entry $C^{(m,n)}_{ii}$ gives the Pearson correlation between feature $i$ across the two modalities.
The decorrelation loss penalizes only these diagonal entries:
\[
  \mathcal{L}_{\mathrm{dec}} = \frac{2}{M(M{-}1)} \sum_{0 \leq m < n < M}
  \frac{1}{D_{\mathrm{MoDAl}}} \sum_{i=1}^{D_{\mathrm{MoDAl}}} \bigl(C^{(m,n)}_{ii}\bigr)^2.
\]
This is inspired by Barlow Twins~\cite{zbontarBarlowTwinsSelfSupervised2021}, but adapted for inter-modality comparison: The cross-correlation is computed \emph{between} different modality encoders rather than between augmented views, and the diagonal target is zero (decorrelation) rather than one (invariance).
We omit the off-diagonal penalty $\sum_{i \neq j}(C^{(m,n)}_{ij})^2$: At the contrastive equilibrium where $\check{\vb U}^{(m)} = \check{\vb U}^{(n)}$, the off-diagonal entries reduce to \emph{within}-modality feature correlations, which regularize for feature independence but do not enforce the inter-modality decorrelation we seek.
The diagonal alone directly measures whether corresponding text-aligned features carry redundant information across modalities.

\begin{proposition}[Decorrelation at the Contrastive Fixed Point]\label{prop:decorr}
  If $\hat{\vb u}^{(m)}_i = \hat{\vb u}^{(n)}_i$ for all $i \in [B]$ (the contrastive equilibrium from Proposition~\ref{prop:alignment}), then after batch-wise feature standardization, $C^{(m,n)}_{ii} = 1$ for all $i \in [D_{\mathrm{MoDAl}}]$, and the decorrelation loss achieves its maximum value of $1$.
\end{proposition}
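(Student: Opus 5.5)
The plan is to compute $C^{(m,n)}_{ii}$ directly from the definition of batch standardization, then bound $\mathcal{L}_{\mathrm{dec}}$ using the fact that Pearson correlations lie in $[-1,1]$.

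\textbf{Step 1 (standardization is a deterministic map).} Batch standardization acts feature by feature. For feature $i$ of modality $m$, let $\mu^{(m)}_i = \frac1B\sum_b \hat u^{(m)}_{b,i}$ and $\sigma^{(m)}_i$ be the batch mean and standard deviation. Then $\check u^{(m)}_{b,i} = (\hat u^{(m)}_{b,i}-\mu^{(m)}_i)/\sigma^{(m)}_i$. If $\hat{\vb u}^{(m)}_b = \hat{\vb u}^{(n)}_b$ for every sample $b$, then the means and standard deviations agree, so $\check{\vb U}^{(m)} = \check{\vb U}^{(n)}$ entrywise.

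\textbf{Step 2 (diagonal entries).} The diagonal entry is
\[
  C^{(m,n)}_{ii} = \frac1B\sum_b \bigl(\check u^{(m)}_{b,i}\bigr)^2 .
\]
This equals the empirical variance of a standardized column, hence $1$, provided the normalization convention uses the same $1/B$ factor as in $\vb C$. The denominator in standardization should therefore be the biased (population) standard deviation. If an unbiased estimator is used, one gets $(B-1)/B$ instead, and I would note this explicitly as a convention.

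\textbf{Step 3 (non-degenerate features).} Step 2 needs $\sigma^{(m)}_i>0$ for every feature $i$. This is the step I expect to require care. At the contrastive fixed point, Proposition~\ref{prop:alignment} gives a simplex ETF configuration, whose centered vectors span a $(B-1)$-dimensional subspace. A given coordinate may still be constant across the batch, in which case standardization is undefined. I would handle this in one of two ways:
\begin{itemize}
  \item assume, as is standard, that every feature has nonzero batch variance (or that an $\epsilon$ is added so the variance is well defined); or
  \item invoke the ETF's orthogonal-invariance freedom to choose a generic rotation in which every coordinate is non-constant.
\end{itemize}
I would state the result under the nonzero-variance hypothesis.

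\textbf{Step 4 (maximality).} By Cauchy--Schwarz, every diagonal entry of a cross-correlation of standardized columns satisfies $\lvert C^{(m,n)}_{ii}\rvert\le 1$. Hence each term $\bigl(C^{(m,n)}_{ii}\bigr)^2\le 1$, and $\mathcal{L}_{\mathrm{dec}}$, being an average of such terms over $D_{\mathrm{MoDAl}}$ features and $M(M-1)/2$ pairs, is at most $1$.

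At the coalesced point every term equals $1$, since by Proposition~\ref{prop:alignment} coalescence holds for all pairs simultaneously. So the loss equals $1$, which is its maximum. If only the single pair $(m,n)$ coalesces, the conclusion is instead that its pairwise contribution attains its maximum of $1$.
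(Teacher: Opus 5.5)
Your proof is correct and follows the same route as the paper. Identical projections give identical standardized matrices $\check{\vb U}^{(m)} = \check{\vb U}^{(n)}$, so each diagonal entry is the mean square of a standardized column and equals $1$. Your additions fill in points the paper's proof leaves implicit: the nonzero-variance hypothesis (genuinely needed, since an ETF embedded in the first $B$ coordinates leaves the remaining $D_{\mathrm{MoDAl}}-B$ coordinates constant), the $1/B$ normalization convention, the Cauchy--Schwarz bound $\lvert C^{(m,n)}_{ii}\rvert \le 1$ that justifies calling $1$ the ``maximum,'' and the remark that $\mathcal{L}_{\mathrm{dec}} = 1$ requires every pair to coalesce.
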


Propositions~\ref{prop:alignment} and~\ref{prop:decorr} (proofs in Appendix~\ref{app:proofs}) together establish that the contrastive and decorrelation losses are in \emph{productive tension}: The contrastive loss drives $C^{(m,n)}_{ii} \to 1$ (via transitive alignment through text), while the decorrelation loss drives $C^{(m,n)}_{ii} \to 0$.
At equilibrium, each encoder must maintain sufficient alignment with text to support the transcription task ($\mathcal{L}_{\mathrm{CE}}$) while encoding \emph{different} text-relevant information from the other encoders.
This is the mechanism by which modality discovery emerges without explicit modality labels.

\paragraph{Loss Warmup}
The contrastive and decorrelation loss weights are linearly warmed up from zero: $\lambda_{\mathrm{con}}(s) = (s/s_{\max})\,\lambda_{\mathrm{con}}$ and $\lambda_{\mathrm{dec}}(s) = (s/s_{\max})\,\lambda_{\mathrm{dec}}$, where $s$ is the current optimizer step and $s_{\max}$ the total number of steps.
This ensures that $\mathcal{L}_{\mathrm{CE}}$ establishes a baseline representation before the auxiliary losses begin to shape encoder specialization.

\section{Experiments}

\subsection{Experimental Setup}
\label{sec:setup}

\paragraph{Data Splits and Evaluation}
We use the default partitioning of the Brain-to-Text Benchmark~'24 dataset~\cite{willettHighperformanceSpeechNeuroprosthesis2023}: For each recording day, the first two blocks are reserved as competition holdout, the last block as the test set, and the remaining blocks as training data, yielding 8,780 training, 880 test, and 1,200 holdout sentences.
All models are selected based on test-set WER, and only the main experiment is evaluated on the competition holdout due to submission limits.
WER measures the minimum edit distance between predicted and reference word sequences, normalized by the reference length:
\[
  \mathrm{WER} = \frac{S + D + I}{N},
\]
where $S$, $D$, and $I$ denote the number of substitutions, deletions, and insertions, and $N$ is the number of reference words.
We lowercase all text and remove all punctuation except apostrophes before computing WER.

\paragraph{Training}
In Stage~1, the articulatory encoder $\operatorname{BE}^{(0)}$ is pretrained with the CTC loss for 300 epochs (peak learning rate $1{\times}10^{-4}$).
In Stage~2, the full pipeline is trained end-to-end for 150 epochs with the composite objective from Section~\ref{sec:multimodal_losses}.
We use two learning rate groups: $1.5{\times}10^{-4}$ for the pretrained $\operatorname{BE}^{(0)}$ and LoRA adapters, and $2{\times}10^{-4}$ for the randomly initialized modules ($\operatorname{BE}^{(1,2)}$ and projection heads), allowing the new encoders to adapt faster while using the pretrained modalities as stable anchors.
Both stages use AdamW~\cite{loshchilovDecoupledWeightDecay2019} with cosine annealing to zero.
The complete set of hyperparameters is given in Table~\ref{tab:hyperparams} (Appendix~\ref{app:hyperparams}).
All experiments are run on 8\,$\times$\,NVIDIA RTX PRO 6000 GPUs with distributed data-parallel training and mixed precision (bfloat16).

\subsection{Neural Speech Decoding Results}

\begin{table}
  \centering
  \caption{WER (\%) comparison on the Brain-to-Text Benchmark~'24. MoDAl results are mean $\bm\pm$\,95\% CI over 5 seeds. All methods use the same dataset and evaluation splits.}
  \label{tab:main_results}
  \begin{tabular}{llcc}
    \toprule
    \textbf{Method} & \textbf{Type} & \textbf{Test~$\downarrow$} & \textbf{Holdout~$\downarrow$} \\
    \midrule
    \citet{willettHighperformanceSpeechNeuroprosthesis2023}
    & Cascaded & 23.8--24.7\rlap{$^{\dagger}$} & \textbf{15.4} \\
    \citet{fengEndtoEndFrameworkInvasive2024}
    & E2E & 26.3 & 24.7 \\
    \cmidrule(lr){1-4}\addlinespace
    MoDAl-1 (ours) & E2E & \textit{22.4\,$\pm$\,0.2} & 18.8\,$\pm$\,0.5 \\
    \rowcolor{cmd!8}
    MoDAl-Full (ours) & E2E & \textbf{21.6\,$\pm$\,0.1} & \textit{17.7\,$\pm$\,0.4} \\
    \bottomrule
  \end{tabular}

  \smallskip\raggedright\footnotesize
  $^{\dagger}$\,Willett et al.\ report online WER on their own evaluation sentences (not the benchmark test split), separately for vocal (23.8\%; 95\% CI 21.8--25.9) and silent (24.7\%; 95\% CI 22.0--27.4) conditions. We include these as a reference point, noting that the evaluation sentences differ from the benchmark's test partition.
\end{table}

Table~\ref{tab:main_results} compares MoDAl against prior work.
We report \emph{MoDAl-1}, which uses only $\operatorname{BE}^{(0)}$ on area~6v with contrastive alignment to text, and \emph{MoDAl-Full}, which adds $\operatorname{BE}^{(1)}$ and $\operatorname{BE}^{(2)}$ with decorrelation and contrastive alignment.

The improvement over prior E2E work decomposes into two sources.
First, MoDAl-1 reduces WER by 3.9 percentage points on the test set and 5.9 on the holdout relative to \citet{fengEndtoEndFrameworkInvasive2024}, despite sharing the same two-stage paradigm (CTC pretraining followed by E2E LLM fine-tuning) and BiGRU encoder architecture.
This gain is attributable to contrastive alignment with text in the MoDAl space, which provides an auxiliary training signal that improves the encoder representations beyond what cross-entropy alone achieves.
Second, MoDAl-Full further reduces WER by 0.8 points on the test set and 1.1 on the holdout by incorporating the previously discarded area~44 signals (95\% confidence intervals do not overlap; Table~\ref{tab:main_results}).
This demonstrates that the MoDAl framework can extract complementary information from neural signals that prior methods found uninformative.

On the competition holdout, MoDAl (17.7\%) approaches the offline cascaded system of \citet{willettHighperformanceSpeechNeuroprosthesis2023} (15.4\%), despite fundamental architectural differences: Their system requires a separately trained $n$-gram language model with beam search re-scoring, whereas MoDAl is a single E2E model that relies on the LLM's pretrained weights for language modeling.
Their online system achieves 23.8--24.7\%; while not directly comparable due to different test partitions, MoDAl (21.6\%) outperforms this reference point.
Language model re-scoring is orthogonal to our modality discovery contribution and could in principle be applied on top of MoDAl to further improve performance.

\subsection{Ablation Studies}

\subsubsection{Loss Ablations}

\begin{table}
  \centering
  \caption{Loss ablation on the test set. WER (\%) is mean $\pm$\,95\% CI over 5 seeds. All configurations use three brain encoders.}
  \label{tab:loss_ablation}
  \Description{Table showing five loss configurations and their test WER. The full MoDAl model with all three losses in a shared projection space achieves the best WER of 21.6 percent.}
  \begin{tabular}{rccccc}
    \toprule
    & \textbf{$\mathcal{L}_{\mathrm{CE}}$} & \textbf{$\mathcal{L}_{\mathrm{con}}$} & \textbf{$\mathcal{L}_{\mathrm{dec}}$} & \textbf{Shared proj.} & \textbf{Test WER~$\downarrow$} \\
    \midrule
    1. &  \checkmark &            &            & n/a      & 23.9\,$\pm$\,0.7 \\
    2. &  \checkmark & \checkmark &            & n/a      & 22.0\,$\pm$\,0.1 \\
    3. &  \checkmark &            & \checkmark & n/a      & 26.1\,$\pm$\,0.6 \\
    4. &  \checkmark & \checkmark & \checkmark & $\times$ & 22.2\,$\pm$\,0.3 \\
    \rowcolor{cmd!8}
    5. & \checkmark  & \checkmark & \checkmark &\checkmark& \textbf{21.6\,$\pm$\,0.1} \\
    \bottomrule
  \end{tabular}
\end{table}

Table~\ref{tab:loss_ablation} isolates the contribution of each loss component.
Adding contrastive alignment to the cross-entropy baseline yields the largest single improvement (23.9\% $\to$ 22.0\%), confirming that aligning brain encoder representations with text in the MoDAl space substantially improves the learned representations beyond what $\mathcal{L}_{\mathrm{CE}}$ alone achieves.

Decorrelation without contrastive alignment (row~3) degrades performance to 26.1\%, worse than the CE-only baseline.
This validates the theoretical prediction: $\mathcal{L}_{\mathrm{dec}}$ is designed to counteract the transitive coalescence induced by $\mathcal{L}_{\mathrm{con}}$ (Propositions~\ref{prop:alignment}--\ref{prop:decorr}), and without that coalescence to push against, the decorrelation loss disrupts the shared structure that the LLM decoder relies on.

When both losses are active but the decorrelation projector is separate from the contrastive projector (row~4), performance is statistically indistinguishable from no decorrelation (row~2), as 95\% confidence intervals overlap.
Table~\ref{tab:crosscorr} confirms that the separate projector achieves near-identical cross-correlations to the no-decorrelation baseline---the trivial-collapse failure mode predicted in Section~\ref{sec:modal_space}.

Only when all three losses operate in the shared MoDAl space (row~5) does the full benefit emerge: 21.6\%, a further 0.4 percentage point improvement.
This confirms that productive tension between $\mathcal{L}_{\mathrm{con}}$ and $\mathcal{L}_{\mathrm{dec}}$ in a shared space is the mechanism that enables discovery of complementary modalities from the previously discarded area~44 signals.

We also find that the auxiliary losses are sensitive to batch size: Halving it to $B{=}64$ erases their benefit entirely, matching the CE-only baseline (Appendix~\ref{app:component_ablations}).

\subsubsection{Component Ablations}

\begin{table}
  \centering
  \caption{Component ablations on the test set. WER (\%) is mean $\pm$\,95\% CI over 5 seeds. All configurations use the full loss ($\mathcal{L}_{\mathrm{CE}} + \mathcal{L}_{\mathrm{con}} + \mathcal{L}_{\mathrm{dec}}$) in the shared MoDAl space.}
  \label{tab:component_ablation}
  \Description{Table showing four component ablations and the full model. Removing any encoder or using sequence output from the novel encoders degrades performance.}
  \begin{tabular}{lc}
    \toprule
    \textbf{Configuration} & \textbf{Test WER~$\downarrow$} \\
    \midrule
    1.\enspace Sequence output from $\operatorname{BE}^{(1,2)}$ & 24.0\,$\pm$\,0.3 \\
    2.\enspace $\operatorname{BE} = \{\operatorname{BE}^{(0)}, \operatorname{BE}^{(1)}\}$ & 22.2\,$\pm$\,0.3 \\
    3.\enspace $\operatorname{BE} = \{\operatorname{BE}^{(0)}, \operatorname{BE}^{(2)}\}$ & 22.3\,$\pm$\,0.3 \\
    4.\enspace $\operatorname{BE} = \{\operatorname{BE}^{(0)}, \operatorname{BE}^{(0)\prime}\}$ (area~6v only) & 22.5\,$\pm$\,0.3 \\
    \rowcolor{cmd!8}
    5.\enspace MoDAl-Full & \textbf{21.6\,$\pm$\,0.1} \\
    \bottomrule
  \end{tabular}
\end{table}

Table~\ref{tab:component_ablation} ablates the parallel encoder design.
Replacing the utterance-level representation of $\operatorname{BE}^{(1,2)}$ with full sequence output (row~1) causes the largest degradation (24.0\%), confirming that multiple unsupervised token streams destabilize training as discussed in Section~\ref{sec:utterance_level_modalities}: The novel encoders are expected to capture utterance-level linguistic structure rather than frame-aligned articulatory dynamics, and a single conditioning vector per encoder is sufficient to convey this information to the LLM decoder.

Dropping either novel encoder (rows~2--3) yields similar performance ($\approx$22.2--22.3\%), which is better than using no novel encoders at all (MoDAl-1: 22.4\% in Table~\ref{tab:main_results}) but worse than the full three-encoder model (21.6\%).
This indicates that $\operatorname{BE}^{(1)}$ and $\operatorname{BE}^{(2)}$ each contribute complementary information that the other does not fully recover---consistent with their different input compositions (area~6v$+$44 spike power vs.\ area~44 only).

Replacing $\operatorname{BE}^{(1,2)}$ with a second area~6v encoder $\operatorname{BE}^{(0)\prime}$ (row~4) performs worst among the two-encoder configurations (22.5\%), confirming that the benefit of additional encoders depends on access to area~44 signals, not merely on added model capacity.
Additional component ablations (ResCNN, dropout, batch size, LLM choice, learning rate) are reported in Appendix~\ref{app:component_ablations}.

\subsection{Analysis of Discovered Modalities}
\subsubsection{Decorrelation in the MoDAl Space}
\label{sec:decorr_analysis}

To verify that the decorrelation loss produces meaningfully different representations, we measure the mean squared diagonal cross-correlation $\frac{1}{D_{\mathrm{MoDAl}}}\sum_i (C^{(m,n)}_{ii})^2$ between each modality pair in the full $D_{\mathrm{MoDAl}}=8192$-dimensional MoDAl space on the test set.
This is the quantity minimized by $\mathcal{L}_{\mathrm{dec}}$; lower values indicate greater decorrelation.
We compare three configurations: (i)~no decorrelation loss ($\lambda_{\mathrm{dec}}=0$), (ii)~decorrelation through a separate projector (not shared with the contrastive loss), and (iii)~MoDAl-Full with decorrelation in the shared projection space.

\begin{table}
  \centering
  \caption{Mean squared diagonal cross-correlation $\frac{1}{D}\sum_i (C^{(m,n)}_{ii})^2$ between modality pairs on the test set ($D{=}D_{\mathrm{MoDAl}}{=}8192$, $B_\mathrm{test}{=}880$). Lower is better.}
  \label{tab:crosscorr}
  \Description{Table showing cross-correlation values for three modality pairs across three ablation configurations. MoDAl-Full achieves an order of magnitude lower correlation than the other two configurations.}
  \begin{tabular}{lccc}
    \toprule
    \textbf{Configuration~\textbackslash~Pair} & \textbf{${(0,1)}$~$\downarrow$} & \textbf{${(0,2)}$~$\downarrow$} & \textbf{${(1,2)}$~$\downarrow$} \\
    \midrule
    1.\enspace No $\mathcal{L}_{\mathrm{dec}}$  & .085 & .044 & .197 \\
    2.\enspace Separate proj.                    & .081 & .044 & .213 \\
    \rowcolor{cmd!8}
    3.\enspace MoDAl-Full                        & \textbf{.003} & \textbf{.004} & \textbf{.015} \\
    \bottomrule
  \end{tabular}
\end{table}

Table~\ref{tab:crosscorr} reveals two findings.
First, projecting into separate contrastive and decorrelation spaces (row~2) produces cross-correlations nearly identical to no decorrelation (row~1), confirming the trivial-collapse failure mode from Section~\ref{sec:modal_space}: The projector satisfies $\mathcal{L}_{\mathrm{dec}}$ in its own space without propagating decorrelation pressure to the encoders.
Second, the shared-space design (row~3) reduces cross-correlation by over an order of magnitude across all pairs, demonstrating that the contrastive loss prevents projector collapse and forces genuinely different encoder representations.
The corresponding WER differences (Table~\ref{tab:loss_ablation}) confirm that this decorrelation translates to improved downstream performance.

\subsubsection{Contrastive Alignment Quality}

We assess text--neural alignment quality in the MoDAl space using cosine-similarity retrieval and canonical correlation analysis (CCA).

\paragraph{Retrieval}
For each modality $m$, we use the L2-normalized MoDAl projections $\hat{\vb u}^{(m)}_i$ as queries to retrieve the matched text projection $\hat{\vb u}^{(t)}_i$ from the full test set via cosine similarity, and report recall@$k$.

\begin{figure}
  \centering
  \includegraphics[width=\linewidth]{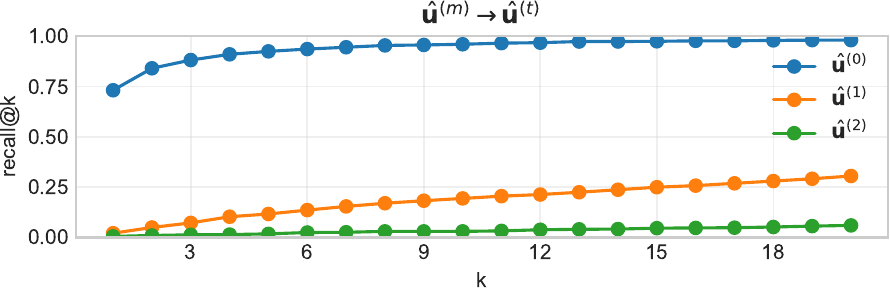}
  \caption{Neural-to-text retrieval recall@$k$ in the MoDAl space for each modality on the test set ($n=880$).}
  \Description[Recall at k curves for three modalities.]{Line plot showing recall at k from 1 to 20 for three modalities. Modality 0 achieves recall at 1 of 73\% rising to near 99\% at k equals 20. Modality 1 rises gradually from about 3\% to 30\%. Modality 2 remains below 7\% throughout.}
  \label{fig:retrieval}
\end{figure}

Figure~\ref{fig:retrieval} shows that $\operatorname{BE}^{(0)}$ achieves strong sample-level alignment (recall@1~$=73\%$, recall@20~$\approx 99\%$), expected given its rich temporal token sequence.
$\operatorname{BE}^{(1)}$ and $\operatorname{BE}^{(2)}$ show substantially lower retrieval (recall@1~$<5\%$), but this is consistent with the MoDAl design: These encoders produce a single utterance-level vector and are explicitly trained via $\mathcal{L}_{\mathrm{dec}}$ to avoid encoding the same sample-discriminative information as $\operatorname{BE}^{(0)}$.
Their role is as \emph{complementary conditioning signals} for the LLM---not standalone sample identifiers---confirmed by the WER improvement from MoDAl-1 to MoDAl-Full (Table~\ref{tab:main_results}) and the linguistic probe results below (Table~\ref{tab:probe_summary}).

\paragraph{Paired CCA}
To visualize the geometric relationship between neural and text representations beyond cosine retrieval, we apply CCA~\cite{hotellingRelationsTwoSets1936} to each neural--text pair.
Let $\hat{\vb z}^{(m)}_i = \bar{\vb z}^{(m)}_i / \lVert \bar{\vb z}^{(m)}_i \rVert$ denote the L2-normalized pooled representation in LLM space (before MoDAl projection).
CCA finds linear projections of the neural and text matrices into a shared 2-D subspace that maximizes pairwise correlation.
We apply it separately to the LLM-space pairs $(\hat{\vb z}^{(m)}, \hat{\vb z}^{(t)})$ and the MoDAl-space pairs $(\hat{\vb u}^{(m)}, \hat{\vb u}^{(t)})$.
Lines connect matched neural--text samples; shorter lines indicate tighter alignment.

\begin{figure}
  \centering
  \includegraphics[width=\linewidth]{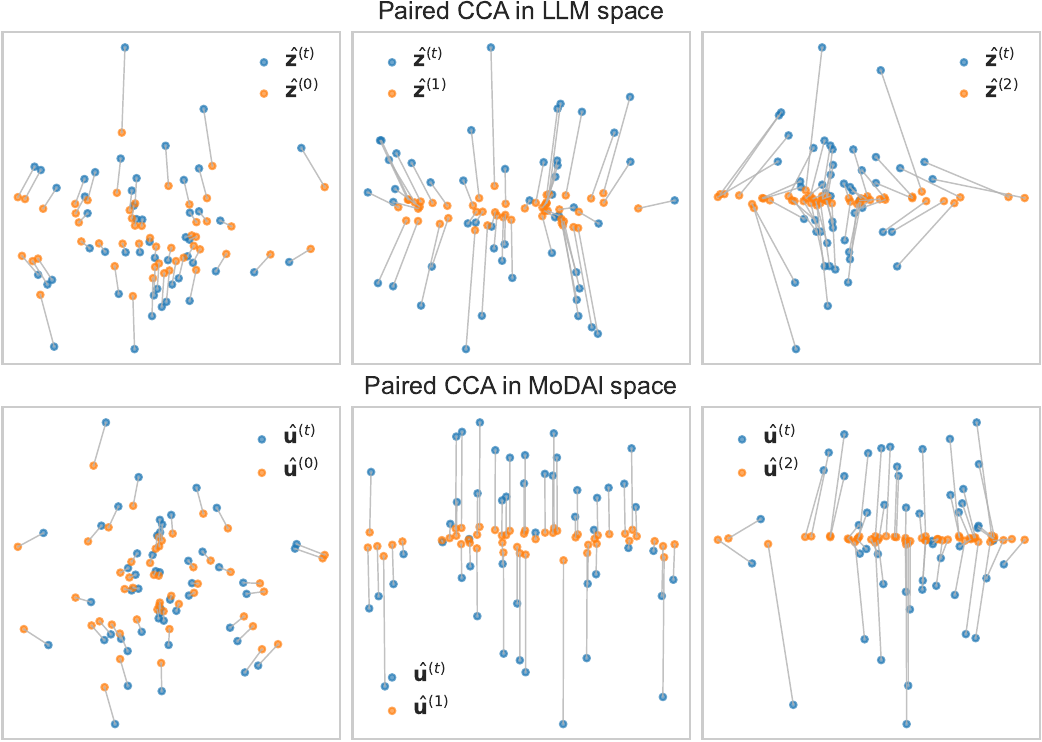}
  \caption{Paired CCA projections of neural and text representations. Top row: LLM embedding space ($\hat{\vb z}^{(m)}$ vs.\ $\hat{\vb z}^{(t)}$). Bottom row: MoDAl projection space ($\hat{\vb u}^{(m)}$ vs.\ $\hat{\vb u}^{(t)}$). Lines connect matched pairs ($n=50$ subsampled for clarity).}
  \Description[Six scatter plots showing paired CCA for three modalities in two spaces.]{Grid of six scatter plots arranged in two rows and three columns. Each plot shows blue dots for text and orange dots for neural representations connected by grey lines. In the top row showing LLM space, all three modalities show moderate alignment with scattered long lines. In the bottom row showing MoDAl space, modality 0 shows very tight pairing with short lines. Modalities 1 and 2 show clearer geometric structure with more organized pairing compared to the LLM space, though with longer lines than modality 0.}
  \label{fig:cca}
\end{figure}

Figure~\ref{fig:cca} compares alignment before (top, LLM space) and after (bottom, MoDAl space) projection.
In the LLM space, $\operatorname{BE}^{(0)}$ shows moderate correspondence and the novel modalities show noisy alignment.
After projection, the structure sharpens: $\operatorname{BE}^{(0)}$ exhibits tight pairing (short lines), while $\operatorname{BE}^{(1)}$ and $\operatorname{BE}^{(2)}$ reveal organized geometric patterns with systematic neural--text correspondence that is not visible in the LLM space.
This confirms that the MoDAl projection concentrates representational variance onto text-relevant directions, successfully extracting structure from the novel encoders that exists in the high-dimensional LLM embeddings but is not apparent without the learned projection.

\subsubsection{Representational Diversity}
To investigate whether the discovered modalities encode different linguistic information, we train linear probes~\cite{alainUnderstandingIntermediateLayers2018} on the L2-normalized MoDAl projections $\hat{\vb u}^{(m)}_i$ for each brain encoder and the text projection $\hat{\vb u}^{(t)}_i$ as a reference.
For each sentence in the test set ($n=880$), we extract eight linguistic properties from the ground-truth transcription using a rule-based grammar detector.
For classification targets (person, transitivity, tense, aspect, voice, determiner, wh-word) we fit a RidgeClassifier with stratified 5-fold cross-validation and report accuracy; for the regression target (sentence length) we fit Ridge regression ($\alpha{=}1.0$) and report $R^2$.
All results are significant at $p < 10^{-3}$ (one-sample $t$-test against chance).
Full details including class labels and per-fold standard deviations are given in Table~\ref{tab:probe} (Appendix~\ref{app:probes}).

\begin{table}
  \centering
  \caption{Linear probe scores on linguistic properties. Bold: best neural modality per row. Italics: second best. The text column ($\hat{\vb u}^{(t)}$) serves as a reference. Classification targets report accuracy; length reports $R^2$.}
  \label{tab:probe_summary}
  \begin{tabular}{llcccc}
    \toprule
    \textbf{Property} & \textbf{Metric~$\uparrow$}
    & \textbf{$\hat{\vb u}^{(0)}$} & \textbf{$\hat{\vb u}^{(1)}$} & \textbf{$\hat{\vb u}^{(2)}$}
    & \textbf{\textcolor{black!67}{$\hat{\vb u}^{(t)}$}} \\
    \midrule
    Length       & $R^2$  & .67          & \textbf{.87} & \textit{.75} & \textcolor{black!67}{.75} \\
    Voice        & Acc    & .90          & \textit{.95} & \textbf{.95} & \textcolor{black!67}{.91} \\
    Wh-word      & Acc    & .85          & \textit{.90} & \textbf{.90} & \textcolor{black!67}{.98} \\
    Determiner   & Acc    & .79          & \textbf{.83} & \textit{.83} & \textcolor{black!67}{.85} \\
    Person       & Acc    & \textbf{.74} & \textit{.71} & .59          & \textcolor{black!67}{.85} \\
    Aspect       & Acc    & .65          & \textbf{.68} & \textit{.67} & \textcolor{black!67}{.69} \\
    Tense        & Acc    & \textbf{.54} & .49          & \textit{.50} & \textcolor{black!67}{.73} \\
    Transitivity & Acc    & \textbf{.42} & \textit{.40} & .34          & \textcolor{black!67}{.48} \\
    \bottomrule
  \end{tabular}
\end{table}

Table~\ref{tab:probe_summary} reveals clear linguistic specialization.
The novel encoders $\operatorname{BE}^{(1)}$ (area~6v$+$44 spike power) and $\operatorname{BE}^{(2)}$ (area~44) achieve the highest neural scores on structural and syntactic properties: sentence length ($R^2 = .87$ for $\operatorname{BE}^{(1)}$, far exceeding even the text reference), grammatical voice (95\% for both), wh-word presence (90\%), and determiner type (83\%).
By contrast, $\operatorname{BE}^{(0)}$ (area~6v) excels at grammatical person (74\% vs.\ 71\% and 59\%) and tense (54\% vs.\ 49\% and 50\%); these are grammatical properties, but can be reliably inferred from articulatory patterns, since English pronouns (e.g., \emph{I}, \emph{you}) and tense morphology (e.g., \emph{-d}) are distinguishable from their articulatory signatures alone.

This pattern aligns with the neurolinguistic understanding of these cortical regions: Area~6v encodes motor commands for speech production, while area~44 (Broca's area) is associated with higher-level syntactic and structural processing.
The MoDAl framework successfully recovers this functional distinction without any explicit supervision over which linguistic properties each encoder should capture.
Notably, $\operatorname{BE}^{(1)}$ achieves $R^2 = .87$ on sentence length---substantially higher than the text reference (.75)---suggesting that this encoder captures temporal extent information from the neural signal that is not linearly accessible from pooled text embeddings.

\section{Conclusion}

We presented MoDAl, a framework for self-supervised neural modality discovery that combines contrastive alignment and decorrelation in a shared projection space.
The core insight is that these two objectives are in productive tension: Contrastive alignment anchors all brain encoders to text but transitively drives them to coalesce, while decorrelation counteracts this coalescence, forcing each encoder to specialize in different text-relevant information.
We proved that this tension is inherent to the shared-space design (Propositions~\ref{prop:alignment}--\ref{prop:decorr}) and showed empirically that both components---and their co-location in a single projection space---are necessary for the framework to discover useful modalities.

Applied to the Brain-to-Text Benchmark~'24, MoDAl achieved 21.6\% WER, a 4.7 percentage point improvement over the previous best E2E method.
The central finding is that area~44 (Broca's area) signals---discarded as uninformative by all prior work on this dataset---carry complementary linguistic information that the MoDAl framework can discover and exploit.
The contribution is both quantitative and qualitative: Beyond the WER reduction, linear probes reveal that the novel encoders surface syntactic and structural properties from area~44 that are complementary to the articulatory features captured by the primary encoder and not linearly accessible from text embeddings alone, consistent with the known neurolinguistic roles of these cortical regions.

Several directions remain open.
The current system uses three parallel encoders; scaling to more could reveal additional modalities.
The underlying mechanism---contrastive alignment with decorrelation for modality discovery---is domain-agnostic and could apply wherever intercorrelated signals must be decomposed into complementary representations, such as multi-sensor physiological monitoring or multi-region neural recordings for other cognitive tasks.
Finally, complementary decoding strategies such as language model re-ranking could be applied on top of MoDAl to further close the gap with cascaded systems.

\section*{Safe and Responsible Innovation Statement}

This work aims to restore communication for individuals with speech-impairing conditions such as ALS, directly supporting the inclusivity of multimodal interaction.
The neural data used are from a publicly available, de-identified benchmark dataset collected with informed consent under institutional review~\cite{willettHighperformanceSpeechNeuroprosthesis2023}.
Because neural signals may carry information beyond what participants intend to communicate, future deployments of speech BCIs must ensure that users retain control over when decoding is active and what is shared.
The MoDAl framework's ability to discover latent neural modalities underscores this concern: Decoded representations may reflect cognitive states beyond intended speech, and safeguards against unauthorized or covert decoding are essential.
We encourage the community to develop consent frameworks and privacy protections that evolve alongside the increasing capability of neural decoding systems.

%%
%% The acknowledgments section is defined using the "acks" environment
%% (and NOT an unnumbered section). This ensures the proper
%% identification of the section in the article metadata, and the
%% consistent spelling of the heading.
% \begin{acks}
% To Robert, for the bagels and explaining CMYK and color spaces.
% \end{acks}
\begin{acks}
  This research was funded by the Defense Advanced Research Projects Agency (DARPA),
  under contract W912CG23C0031.
\end{acks}

%%
%% The next two lines define the bibliography style to be used, and
%% the bibliography file.
\bibliographystyle{ACM-Reference-Format}
\bibliography{references}

%%
%% If your work has an appendix, this is the place to put it.
\appendix

\section{Proofs}
\label{app:proofs}

\begin{proof}[Proof of Proposition~\ref{prop:alignment}]
  We optimize jointly over the modality projections $\{\hat{\vb u}^{(m)}_i\}$ and the text projections $\{\hat{\vb u}^{(t)}_i\}$ on the sphere, in the unconstrained-feature regime standard to this analysis; both are outputs of trained projectors (Section~\ref{sec:modal_space}), so the text targets are not fixed.

  \paragraph{Reduction to a neural-collapse functional}
  For a fixed modality $m$, write the neural-to-text term as
  \[
    \ell^{(m \to t)}_{i}
    = \log\frac{\sum_{j} e^{\hat{\vb u}^{(m)}_i \cdot \hat{\vb u}^{(t)}_j / \tau}}{e^{\hat{\vb u}^{(m)}_i \cdot \hat{\vb u}^{(t)}_i / \tau}}.
  \]
  Then $\sum_{i} \ell^{(m \to t)}_{i}$ is exactly the asymmetric neural-collapse objective of Lu and Steinerberger~\cite{luNeuralCollapseCrossentropy2022} on the two point sets $\{\hat{\vb u}^{(m)}_i\}_{i=1}^B$ and $\{\hat{\vb u}^{(t)}_i\}_{i=1}^B$, with inverse temperature $\alpha = 1/\tau$.

  \paragraph{One direction}
  By~\cite[Thms.~1--2]{luNeuralCollapseCrossentropy2022}, for $D_{\mathrm{MoDAl}} \ge B-1$ and any fixed $\tau > 0$, the global minimum of $\sum_{i} \ell^{(m \to t)}_{i}$ over unit-norm configurations is attained if and only if $\hat{\vb u}^{(m)}_i = \hat{\vb u}^{(t)}_i$ for all $i$ and $\{\hat{\vb u}^{(t)}_i\}$ forms a simplex equiangular tight frame.

  \paragraph{Symmetrization}
  The text-to-neural sum $\sum_{i} \ell^{(t \to m)}_{i}$ is the same functional with the two point sets interchanged, so it shares the identical minimizer set. Because a sum of two functionals with a common minimizer is minimized precisely on that set, $\mathcal{L}^{(m)}_{\mathrm{con}}$ attains its global minimum---denote it $\mathcal{L}^\star$---exactly on the aligned configurations $\hat{\vb u}^{(m)}_i = \hat{\vb u}^{(t)}_i$.

  \paragraph{Coupling through the shared text target}
  Since $\mathcal{L}_{\mathrm{con}} = \tfrac{1}{M}\sum_{m} \mathcal{L}^{(m)}_{\mathrm{con}}$ and every per-modality term shares the same text set $\{\hat{\vb u}^{(t)}_i\}$, each term satisfies $\mathcal{L}^{(m)}_{\mathrm{con}} \ge \mathcal{L}^\star$, so $\mathcal{L}_{\mathrm{con}} \ge \mathcal{L}^\star$ with equality if and only if each term equals $\mathcal{L}^\star$.
  Choosing one equiangular tight frame $\{\vb w_i\}_{i=1}^B$ and setting $\hat{\vb u}^{(t)}_i = \hat{\vb u}^{(m)}_i = \vb w_i$ for every $m$ and $i$ attains the bound.
  Conversely, any global minimizer must minimize every per-modality term simultaneously, which by the previous step forces $\hat{\vb u}^{(m)}_i = \hat{\vb u}^{(t)}_i$ onto a common frame for all $m$.
  Hence $\hat{\vb u}^{(m)}_i = \hat{\vb u}^{(n)}_i$ for all modality pairs $(m,n)$ and samples $i$---the modalities coalesce---unique up to a global orthogonal transformation of $\mathbb{R}^{D_{\mathrm{MoDAl}}}$.
\end{proof}

\paragraph{Fixed versus trainable temperature}
Proposition~\ref{prop:alignment} fixes $\tau$.
MoDAl instead learns $\tau$, and minimizing jointly over it would drive $\tau \to 0^+$, where the minimizer is no longer uniquely selected.
Under a trainable inverse temperature the global-minimum geometry is governed by the more permissive thresholdable condition of Bangachev and Polyanskiy~\cite{bangachevGlobalMinimizersSigmoid2026}, which no longer forces exact alignment and is consistent with an empirical modality gap.
This is by design rather than a defect: The trainable temperature and $\mathcal{L}_{\mathrm{dec}}$ are the counterforces that keep the coalescence pressure of $\mathcal{L}_{\mathrm{con}}$ from collapsing the encoders.

\begin{proof}[Proof of Proposition~\ref{prop:decorr}]
  If $\hat{\vb u}^{(m)}_i = \hat{\vb u}^{(n)}_i$ for all $i$, then the columns
  $[\hat{\vb u}^{(m)}_1, \dots, \hat{\vb u}^{(m)}_B]^\top$ and
  $[\hat{\vb u}^{(n)}_1, \dots, \hat{\vb u}^{(n)}_B]^\top$ are identical matrices.
  Since batch standardization is a deterministic column-wise affine transformation,
  $\check{\vb U}^{(m)} = \check{\vb U}^{(n)}$.
  Therefore, for each feature $i$,
  \[
    C^{(m,n)}_{ii} = \frac{1}{B}\sum_{k=1}^B \check{U}^{(m)}_{ki}\check{U}^{(n)}_{ki}
    = \frac{1}{B}\sum_{k=1}^B (\check{U}^{(m)}_{ki})^2 = 1,
  \]
  where the last equality follows from the unit-variance property of batch standardization.
\end{proof}

\section{Hyperparameters}
\label{app:hyperparams}

Table~\ref{tab:hyperparams} lists all hyperparameters used in the final model.
Stage~1 and Stage~2 share the same brain encoder architecture; differences are noted where applicable.

\begin{table}
  \centering
  \caption{Hyperparameters for the final MoDAl model.}
  \label{tab:hyperparams}
  \begin{tabular}{l l}
    \toprule
    \textbf{Hyperparameter} & \textbf{Value} \\
    \midrule
    \multicolumn{2}{l}{\textit{Brain encoder (all stages)}} \\
    Input kernel size $k$ / stride $s$ & 14 / 4 \\
    Channel-wise locked dropout & 0.75 \\
    BiGRU layers & 5 \\
    GRU hidden dim (per direction) & 512 \\
    GRU dropout & 0.75 \\
    \midrule
    \multicolumn{2}{l}{\textit{ResCNN (Stage 2, $\operatorname{BE}^{(0)}$ only)}} \\
    Conv kernel size / dilation & 5 / 2 \\
    \midrule
    \multicolumn{2}{l}{\textit{LLM decoder (Stage 2)}} \\
    LLM embedding dim $D_{\mathrm{LLM}}$ & 1536 \\
    QLoRA quantization & NF4, double quant \\
    LoRA rank $r$ / scaling $\alpha$ & 16 / 32 \\
    LoRA dropout & 0.15 \\
    LoRA target modules & all \texttt{*\_proj} layers \\
    \midrule
    \multicolumn{2}{l}{\textit{MoDAl projection (Stage 2)}} \\
    MoDAl projection dim $D_{\mathrm{MoDAl}}$ & 8192 \\
    Projector type & Affine \\
    Temperature $\tau$ (initial) & 0.1 \\
    $\lambda_{\mathrm{con}}$ (final) & 1.0 \\
    $\lambda_{\mathrm{dec}}$ (final) & 2.0 \\
    \midrule
    \multicolumn{2}{l}{\textit{Optimization}} \\
    Optimizer & AdamW \\
    $(\beta_1, \beta_2)$ & (0.9, 0.999) \\
    Weight decay & $10^{-2}$ \\
    LR schedule & Cosine annealing \\
    Peak LR (Stage~1) & $1 \times 10^{-4}$ \\
    Peak LR (Stage~2, $\operatorname{BE}^{(0)}$ only) & $1.5 \times 10^{-4}$ \\
    Peak LR (Stage~2, others) & $2.0 \times 10^{-4}$ \\
    Batch size & 128 \\
    Stage~1 epochs & 300 \\
    Stage~2 epochs & 150 \\
    Gradient clipping & 10.0 \\
    \bottomrule
  \end{tabular}
\end{table}

% moved table up to avoid it dangling on a new page
\begin{table*}
  \centering
  \caption{Full linear probe results. Score is mean $\pm$ std over 5 folds. All $p < 10^{-3}$.}
  \label{tab:probe}
  \begin{tabular}{l l l c c c}
    \toprule
    \textbf{Property} & \textbf{Task} & \textbf{Classes / Range}
    & $\hat{\vb u}^{(0)}$ & $\hat{\vb u}^{(1)}$ & $\hat{\vb u}^{(2)}$
    \\
    \midrule
    Length       & Regression     & 2--17 words
    & $.67 \pm .03$ & $.87 \pm .02$ & $.75 \pm .03$ \\
    Voice        & Classification & active, passive, none
    & $.90 \pm .01$ & $.95 \pm .00$ & $.95 \pm .00$ \\
    Wh-word      & Classification & true, false
    & $.85 \pm .04$ & $.90 \pm .01$ & $.90 \pm .01$ \\
    Determiner   & Classification & definite, indefinite, other, none
    & $.79 \pm .02$ & $.83 \pm .01$ & $.83 \pm .01$ \\
    Person       & Classification & 1st, 2nd, 3rd, none
    & $.74 \pm .03$ & $.71 \pm .02$ & $.59 \pm .02$ \\
    Aspect       & Classification & simple, continuous, perfect, none
    & $.65 \pm .03$ & $.68 \pm .02$ & $.67 \pm .02$ \\
    Tense        & Classification & past, present, future, none
    & $.54 \pm .02$ & $.49 \pm .04$ & $.50 \pm .03$ \\
    Transitivity & Classification & intransitive, transitive, ditransitive, impersonal, none
    & $.42 \pm .01$ & $.40 \pm .03$ & $.34 \pm .03$ \\
    \bottomrule
  \end{tabular}
\end{table*}

\section{Additional Component Ablations}
\label{app:component_ablations}

Table~\ref{tab:additional_ablations} reports ablations of engineering components not directly related to the modality discovery mechanism.

\begin{table}
  \centering
  \caption{Additional component ablations on the test set. WER (\%) is mean $\pm$\,95\% CI over 5 seeds. MoDAl-Full (21.6\,$\pm$\,0.1) is the reference.}
  \label{tab:additional_ablations}
  \begin{tabular}{lc}
    \toprule
    \textbf{Configuration} & \textbf{Test WER~$\downarrow$} \\
    \midrule
    w/o ResCNN                          & 25.0\,$\pm$\,0.2 \\
    w/o LayerNorm on GRU                & 22.1\,$\pm$\,0.2 \\
    w/o $\operatorname{Drop}_{\mathrm{chan}}$ & 26.4\,$\pm$\,0.3 \\
    LLM=Qwen2-Audio-7B-Instruct               & 27.6\,$\pm$\,0.4 \\
    Batch size $B{=}64$                 & 23.9\,$\pm$\,0.3 \\
    Single LR ($1.5{\times}10^{-4}$)    & 22.2\,$\pm$\,0.2 \\
    Single LR ($2{\times}10^{-4}$)      & 22.2\,$\pm$\,0.2 \\
    \bottomrule
  \end{tabular}
\end{table}

The ResCNN on $\operatorname{BE}^{(0)}$ contributes substantially (25.0\% without vs.\ 21.6\% with), confirming that the temporal refinement of articulatory features is important for the LLM decoder.
Channel-wise locked dropout is critical (26.4\% without), consistent with its role in preventing electrode overfitting across recording sessions.
LayerNorm on the GRU output has a modest effect (22.1\% vs.\ 21.6\%).

Using Qwen2-Audio-7B-Instruct \cite{chuQwen2AudioTechnicalReport2024} as the LLM decoder degrades performance to 27.6\%.
The larger model requires substantially more GPU memory, preventing us from increasing the LoRA rank beyond $r{=}16$ or the batch size beyond $B{=}128$ under our hardware constraints.
A higher-capacity adapter and larger batches may be necessary to effectively adapt a 7B model with QLoRA on this dataset; we leave this exploration to future work.

Halving the batch size from 128 to 64 degrades WER to 23.9\%.
Both the contrastive and decorrelation losses benefit from larger batches: The InfoNCE objective requires sufficient in-batch negatives for effective sample discrimination, and the cross-correlation matrix in $\mathcal{L}_{\mathrm{dec}}$ provides more stable estimates of inter-modality feature correlation with more samples.

Using a single learning rate for all Stage~2 modules ($1.5{\times}10^{-4}$ or $2{\times}10^{-4}$) slightly degrades performance to 22.2\% in both cases, confirming that the decoupled schedule---slower for pretrained components, faster for randomly initialized ones---provides a modest but consistent benefit.

\section{Linear Probe Details}
\label{app:probes}

Table~\ref{tab:probe} reports the full linear probe results.
All probes operate on the L2-normalized MoDAl projections ($\hat{\vb u}^{(m)}$ for neural, $\hat{\vb u}^{(t)}$ for text) from the test set (880 sentences).
Linguistic properties are extracted from ground-truth transcriptions using the \texttt{grammar-detector} Python library \cite{GrammardetectorGrammaticalFeature}.
Sentences for which parsing fails (19 out of 880) receive default \texttt{none} labels (e.g., a short utterance like `summary points' does not have tense).
For classification, we use \texttt{RidgeClassifier} with \texttt{StandardScaler} and stratified 5-fold cross-validation, reporting accuracy.
For regression (sentence length), we use \texttt{Ridge} ($\alpha=1.0$) with \texttt{StandardScaler} and 5-fold cross-validation, reporting $R^2$.
$p$-values are computed via one-sample $t$-test of the 5 fold scores against the chance baseline (uniform accuracy $1/\lvert\mathrm{classes}\rvert$ for classification, $R^2 = 0$ for regression).

\end{document}